\documentclass{article}

\usepackage{amsmath,amssymb,pifont}
\usepackage{multicol}
\usepackage{amstext}
\usepackage{amsthm}
\usepackage{multirow}
\usepackage{booktabs}
\usepackage[skip=0pt]{subcaption}
\usepackage{times}
\usepackage{lipsum}
\usepackage[shortlabels]{enumitem}
\usepackage{cancel}
\usepackage{wrapfig}
\usepackage{array}
\usepackage{siunitx}
\usepackage{csvsimple}
\usepackage[multidot]{grffile}
\usepackage{bbm}
\usepackage{dblfloatfix}
\usepackage{hyperref}
\usepackage{makecell}
\usepackage{bbm, dsfont}
\usepackage{mathtools}
\usepackage[dvipsnames]{xcolor}
\usepackage{comment}
\usepackage[capitalise]{cleveref}

\usepackage{geometry}
\usepackage{enumitem}

\usepackage[multiple]{footmisc}
\usepackage{mathrsfs}
\usepackage{todonotes}
\usepackage{tikz}
\usepackage{hyperref}
\usepackage{cleveref}
\usepackage{algpseudocode,algorithm,algorithmicx}
\usepackage{xfrac}

\usepackage{graphicx} 
\usepackage{color}

\usepackage{array}
\usepackage{amssymb}
\usepackage{amsmath}
\usepackage{xspace}
\usepackage{fancyhdr}
\usepackage{comment}

\hypersetup{final}



\newcommand{\boldzero}{\ensuremath{\boldsymbol{0}}}

\newcommand{\bfu}{\ensuremath{\mathbf{u}}}
\newcommand{\bfv}{\ensuremath{\mathbf{v}}}
\newcommand{\bfw}{\ensuremath{\mathbf{w}}}
\newcommand{\bfx}{\ensuremath{\mathbf{x}}}

\newcommand{\bfz}{\ensuremath{\mathbf{z}}}

\newcommand{\calM}{\ensuremath{\mathcal{M}}}


\renewcommand{\Pr}{\mathop{\mathbf{Pr}}}

\newtheorem{lemma}{Lemma}[section]
\newtheorem{theorem}[lemma]{Theorem}

\newtheorem{definition}[lemma]{Definition}


\DeclareMathOperator*{\Bern}{Bern}
\DeclareMathOperator*{\Binom}{Binom}

\makeatletter
\newcommand{\vast}{\bBigg@{4}}
\newcommand{\Vast}{\bBigg@{5}}
\makeatother

\newcommand{\ex}[2]{{\ifx&#1& \mathbb{E} \else
\underset{#1}{\mathbb{E}} \fi \left[#2\right]}}
\newcommand{\pr}[2]{{\ifx&#1& \mathbb{P} \else
\underset{#1}{\mathbb{P}} \fi \left[#2\right]}}

\newcommand{\ltwo}[1]{\left\|#1\right\|_2}

\usepackage{ulem}

\DeclarePairedDelimiterX{\infdivx}[2]{(}{)}{%
  #1\;\delimsize\|\;#2%
}


\newcommand{\mypar}[1]{\smallskip
	\noindent{\textbf{{#1}:}}}
	
\renewcommand{\epsilon}{\varepsilon}

\setlist{nolistsep}
\setlist[itemize]{noitemsep, topsep=0pt}

\setlist{nolistsep}
\setlist[itemize]{noitemsep, topsep=0pt}

\usepackage{fullpage}
\usepackage[numbers, sort, comma, square]{natbib}

\begin{document}

\title{Tighter Privacy Analysis for Truncated Poisson Sampling}
\author{Arun Ganesh\thanks{\texttt{arunganesh@google.com}, Google Research}}
\maketitle

\begin{abstract}
We give a new privacy amplification analysis for truncated Poisson sampling, a Poisson sampling variant that truncates a batch if it exceeds a given maximum batch size.
\end{abstract}

\section{Introduction}

Privacy amplification by sampling is a standard technique for improving privacy-utility tradeoffs in private training algorithms such as DP-SGD. The most commonly analyzed version of privacy amplification by sampling is done by forming batches via Poisson sampling. In practice, Poisson sampling's variable batch sizes are in conflict with the fixed batch sizes needed for XLA compilation, which can be handled by padding or truncating the batch as discussed in \cite{chua2024scalable}. While padding with zero-gradient examples doesn't affect the sum of the gradients and hence doesn't affect privacy analysis, truncating a batch to a maximum size of $B$ after Poisson sampling does affect the sum of the gradients, and hence it could negatively impact the privacy guarantees and must be handled carefully.

One strategy proposed in \cite{chua2024scalable} to handle truncation is to bound the probability $\eta$ across all iterations that Poisson sampling exceeds a batch size of $B$, and absorb this bound into the $\delta$ term. Specifically, they show that if Poisson sampling satisfies $(\epsilon, \delta)$-DP then truncated Poisson sampling satisfies $(\epsilon, \delta + e^\epsilon \eta)$-DP. However, to make $q$ sufficiently small requires us to sample less than $B$ examples in expectation, i.e. there is some amount of wasted computation. The goal of this note is to provide a less lossy privacy guarantee for a specific instatiation of truncated Poisson sampling and reduce the amount of wasted computation in practice.

\section{Problem Statement}

For this note we will consider datasets consisting of vectors in the $\ell_2$-unit ball $\mathbb{B}(\boldzero, 1)$. That is, $D = \{\bfx_1, \bfx_2, \ldots, \bfx_n\}$ is a dataset where $\bfx_i \in \mathbb{R}^d$ and $\ltwo{\bfx_i} \leq 1$ for all $i$. The choice of norm 1 is without loss of generality and can be easily replaced with other values. By standard techniques such as post-processing and adaptive composition, this lets our analysis readily generalize to adaptive vector queries on more general datasets, e.g. computing batch gradients on a dataset.

In \cref{fig:tps} we define the mechanism $\calM_{B,p,\sigma}$ of interest, a vector sum on a subsample of a vector dataset $D \in \mathbb{B}(\boldzero, 1)^n$. Specifically, we independently include each example in $D$ with probability $p$, letting $S$ be the set of included examples. If $|S| \leq B$ then we output $S$, otherwise we take a uniformly random size-$B$ subset of $S$. Then we return the sum of the vectors in $S$, plus Gaussian noise with standard deviation $\sigma$.

\begin{figure}[H]
\begin{algorithm}[H]
\caption{$\calM_{B, p, \sigma}$: Truncated Poisson sampled vector sum}
\label{fig:tps}
\textbf{Parameters:} Truncated batch size $B$, sampling probability $p$.
\newline\textbf{Inputs:} Dataset $D = \{\bfx_1, \bfx_2, \ldots, \bfx_n\} \in \mathbb{B}(\boldzero, 1)^n$
\begin{algorithmic}[1]
\State $S \leftarrow \emptyset$
\For{$i \in [n]$}
\State $S \leftarrow S \cup \{\bfx_i\}$ w.p. $p$, $S \leftarrow S$ otherwise.
\EndFor
\If{$|S| > B$}
\State $S \leftarrow$ uniformly random subset of $S$ of size $B$.
\EndIf
\State $\bfz \sim N(0, \sigma^2 \mathbb{I})$
\State \Return $\sum_{\bfx \in S} \bfx + \bfz$.
\end{algorithmic}
\end{algorithm}
\end{figure}

It is important to note that we do truncation \textit{randomly} rather than arbitrarily. This gives a slightly better privacy guarantee, and we believe it will usually be reasonable to do so in the random-access settings where one would implement Poisson sampling anyway. 

We next define several adjacency definitions of interest.

\begin{definition}

We define the following adjacencies:
\begin{itemize}
    \item \textbf{Add-or-remove-1-of-$n$:} We say $D \stackrel{AR,n}{\sim} D'$ if $|D| = n$ and $D'$ is a size $n-1$ subset of $D$.
    \item \textbf{Zero-out} We say $D \stackrel{ZO,n}{\sim} D'$ if $|D| = |D'| = n$ and $D$ is equal to $D'$ with one element replaced with $\boldzero$.
    \item \textbf{Replace-one} We say $D \stackrel{RO,n}{\sim} D'$ if $|D| = |D'| = n$ and $D$ is equal to $D'$ with one element replaced with an arbitrary element of $\mathbb{B}(\boldzero, 1)$.
\end{itemize}
\end{definition}

\subsection{Dominating Pairs}

It will be convenient to state our results in terms of dominating pairs. Dominating pairs have many equivalent definitions, we use the following one:

\begin{definition}
Two distributions $P, Q$ are \textbf{$(\epsilon, \delta)$-indistinguishable} for $\epsilon \in \mathbb{R}, \delta \in [0, 1]$ if:
\[\forall S: P(S) \leq e^\epsilon Q(S) + \delta.\]
\end{definition}

\begin{definition}
$P, Q$ is a \textbf{dominating pair} for $A, B$ if the following is true: if $P, Q$ are $(\epsilon, \delta)$-indistinguishable, then $A, B$ are $(\epsilon, \delta)$-indistinguishable.

$P, Q$ is a dominating pair for $\calM$ (with respect to an adjacency definition $\sim$) if for all $D \sim D'$, $P, Q$ is a dominating pair for $\calM(D), \calM(D')$.
\end{definition}

We note that as above, dominating pairs are stated in a ``one-directional'' manner, i.e. the indistinguishability definition only guarantees $P(S) \leq e^\epsilon Q(S) + \delta$ but not $Q(S) \leq e^\epsilon P(S) + \delta$. However, \cite{pmlr-v151-zhu22c} show that if $P, Q$ is a dominating pair in one direction, then $Q, P$ is a dominating pair in the other direction. So we will give a dominating pair in one direction only for convenience, and implicitly our results give a dominating pair in the other direction.

Hence to prove a privacy guarantee for $\calM$ under the standard notion of $(\epsilon, \delta)$-DP, it suffices to state a dominating pair whose $(\epsilon, \delta)$-indistinguishability is readily verifiable, which will be the main goal of this note.

The following lemma, which can be proven as a corollary of Theorem 3.4 of \cite{schuchardt2024unified}, will be useful to us:

\begin{lemma}\label{lem:dr}
For any sets of vectors $U = \{\bfu_1, \bfu_2, \ldots, \bfu_n\}$, $V = \{\bfv_1, \bfv_2, \ldots \bfv_m\}$ and associated probabilities $p_1, \ldots, p_n \geq 0: \sum_{i \in [n]} p_i = 1$, $q_1, \ldots, q_n \geq 0: \sum_{j \in [m]} q_j = 1$, 

\[\sum_{i \in [n]} p_i \cdot N(-\ltwo{\bfu_i}, \sigma^2), \sum_{j \in [m]} q_j \cdot N(\ltwo{\bfv_j}, \sigma^2)\] 

is a dominating pair for

\[\sum_{i \in [n]} p_i \cdot N(\bfu_i, \sigma^2 \mathbb{I}), \sum_{j \in [m]} q_j \cdot N(\bfv_j, \sigma^2 \mathbb{I}).\]
\end{lemma}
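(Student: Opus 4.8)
The plan is to deduce the statement from Theorem~3.4 of \cite{schuchardt2024unified}; the work is then to check that the two Gaussian mixtures here match that theorem's hypotheses and to read off what its conclusion says. Throughout I use the hockey-stick divergence $H_\gamma(P\|Q)=\int (dP-\gamma\,dQ)_+$, so that $(P,Q)$ being $(\epsilon,\delta)$-indistinguishable is the same as $H_{e^\epsilon}(P\|Q)\le\delta$, and ``$(P,Q)$ dominates $(A,B)$'' is the same as $H_\gamma(A\|B)\le H_\gamma(P\|Q)$ for all $\gamma>0$. It is worth first isolating why the claimed pair is the natural one. Since the noise is isotropic Gaussian, $H_\gamma$ between two such mixtures is unchanged by a common rotation and by tensoring with a common $N(0,\sigma^2\mathbb I)$ factor, so it depends on $\{\bfu_i\},\{\bfv_j\}$ only through their Gram matrix. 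The pair in the lemma is the one with Gram matrix $\zeta\zeta^\top$ where $\zeta=(\ltwo{\bfu_1},\dots,\ltwo{\bfu_n},-\ltwo{\bfv_1},\dots,-\ltwo{\bfv_m})$, a rank-one extreme point of the set of positive semidefinite matrices with the prescribed diagonal; by Cauchy--Schwarz this configuration simultaneously maximizes every cross-distance $\ltwo{\bfu_i-\bfv_j}$ (to $\ltwo{\bfu_i}+\ltwo{\bfv_j}$) and minimizes every within-group distance, i.e.\ it makes each mixture maximally concentrated while pushing the two mixtures maximally far apart -- morally the most distinguishable placement of the means.

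Once everything lies on a line the remaining content is one-dimensional, and I would record it as a standalone fact: for scalars $a_1,\dots,a_n\le 0\le b_1,\dots,b_m$, the quantity $H_{e^\epsilon}\!\big(\sum_i p_i N(a_i,\sigma^2)\,\big\|\,\sum_j q_j N(b_j,\sigma^2)\big)$ is non-increasing in each $a_i$ and non-decreasing in each $b_j$. Indeed, multiplying numerator and denominator of the likelihood ratio $p(t)/q(t)$ by $e^{t^2/2\sigma^2}$ turns the numerator into a sum of non-increasing exponentials (each $a_i\le 0$) and the denominator into a sum of non-decreasing exponentials, so $p/q$ is non-increasing and the optimal Neyman--Pearson rejection region is a half-line $(-\infty,c)$; differentiating $P\big((-\infty,c)\big)-e^\epsilon Q\big((-\infty,c)\big)$ at the optimal $c$ gives $\partial_{a_i}H=-p_i\,\phi(c-a_i)\le 0$ and $\partial_{b_j}H=e^\epsilon q_j\,\phi(c-b_j)\ge 0$, where $\phi$ is the $N(0,\sigma^2)$ density. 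Taking $a_i=-\ltwo{\bfu_i}$ and $b_j=\ltwo{\bfv_j}$ shows this is the worst one-dimensional pair with the given norms, and together with the triangle inequality $\ltwo{\bfu_i-\bfv_j}\le\ltwo{\bfu_i}+\ltwo{\bfv_j}$ it already settles the case $n=m=1$ by hand (rotate $\bfv_1-\bfu_1$ onto a coordinate axis, discard the now-identical orthogonal coordinates, and apply the monotonicity).

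The genuinely delicate step -- and the reason I would cite \cite{schuchardt2024unified} rather than grind it out -- is passing from the $d$-dimensional \emph{mixture} pair to the one-dimensional mixture pair without decreasing any $H_\gamma$. For a single Gaussian this is just the rotate-and-discard move above, but for a mixture the rotations that would align the various pairs $(\bfu_i,\bfv_j)$ are mutually incompatible, and the obvious surrogates are too lossy. For instance, the pointwise bound $a(x)/b(x)\le \big(\sum_i p_i\,\phi(\ltwo{x}-\ltwo{\bfu_i})\big)\big/\big(\sum_j q_j\,\phi(\ltwo{x}+\ltwo{\bfv_j})\big)$ on the mixture likelihood ratio is correct but overshoots the target once it is substituted back into $H_{e^\epsilon}(P\|Q)=\int(a-e^\epsilon b)_+$, and bounding each spherical integral $\int_{\ltwo{x}=\rho}(a-e^\epsilon b)_+$ by its maximum over that sphere times the surface area loses a factor that grows with $\rho$. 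Theorem~3.4 of \cite{schuchardt2024unified} is exactly the tool that performs this mixture reduction correctly, and I would invoke it for that step, using the elementary observations above only to rewrite its conclusion as the concrete pair stated in the lemma.
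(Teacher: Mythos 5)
Your proposal is correct and matches the paper's treatment: the paper itself gives no proof beyond invoking Theorem~3.4 of \cite{schuchardt2024unified}, which is precisely the step you delegate to that theorem, and your added one-dimensional monotonicity and extremal-configuration remarks are consistent supporting intuition rather than a different route.
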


\section{Our Analysis}

\subsection{Add-Remove Adjacency}

For ease of exposition we start with the simplest version of our result, which uses a restricted version of the add-or-remove-1-of-$n$ adjacency $\stackrel{AR,n}{\sim}$. Normally, the add-remove adjacency says $D$ and $D'$ are adjacent if $D'$ is a size $|D|-1$ subset of $D$ or vice-versa, without a restriction on $|D|$. Our privacy guarantee depends on $|D|$, so we need to restrict to $D, D'$ pairs of a specific size $n$; we discuss this further after the result.

\begin{theorem}\label{thm:truncation-addremove}
Given $n, p, B, \sigma$ define the distributions $P, Q$ as follows:
\begin{itemize}
    \item We sample 
    $r = 1$ w.p. $ \Pr[\Binom(n-1, p) < B]$ and $r = 2$ otherwise.
    \item For $P$, we set 
    \[x = r \cdot \Bern(q), q = \left\{\begin{array}{lr}
        p & \text{if } r = 1,\\
        p \cdot \sum_{s = B}^{n-1} \frac{\Pr[\Binom(n-1, p) = s]}{\Pr[\Binom(n-1, p) \geq B]} \cdot \frac{B}{s+1} & \text{if } r = 2.\\
    \end{array}\right\}\]
    For $Q$, we set $x = 0$.
    \item The final random variable is $(r, x + z)$, where $z \sim N(0, \sigma^2)$.
\end{itemize}.

$P, Q$ is a dominating pair for $\calM_{B,p,\sigma}$ with respect to $\stackrel{AR,n}{\sim}$.
\end{theorem}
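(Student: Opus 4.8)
We will establish the one-directional bound $H_\gamma\bigl(\calM_{B,p,\sigma}(D)\,\big\|\,\calM_{B,p,\sigma}(D')\bigr)\le H_\gamma(P\,\|\,Q)$ for every $\gamma>0$ and every $D\stackrel{AR,n}{\sim}D'$, where $H_\gamma(\cdot\|\cdot)$ is the hockey-stick divergence; by the standard equivalence between hockey-stick divergence and $(\epsilon,\delta)$-indistinguishability this is exactly the dominating-pair property, and the reverse direction is free by \cite{pmlr-v151-zhu22c}. Fix such a pair and write $D=D'\cup\{\bfx_n\}$ with $\ltwo{\bfx_n}\le 1$. Run $\calM(D)$ by first Poisson-subsampling the $n-1$ common elements into a set $T$ (so $|T|\sim\Binom(n-1,p)$) and then independently including $\bfx_n$ with probability $p$ (call this indicator $b$); couple $\calM(D')$ to use the same $T$. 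Let $r=1$ on the event $|T|<B$ and $r=2$ otherwise, so $\Pr[r=1]=\Pr[\Binom(n-1,p)<B]$, matching the split used in $P,Q$. The key structural point is that $P,Q$ literally \emph{output} $r$, so their regime-conditionals are supported on disjoint blocks and $H_\gamma(P\|Q)=\Pr[\Binom(n-1,p)<B]\cdot H_\gamma(P^{(1)}\|Q^{(1)})+\Pr[\Binom(n-1,p)\ge B]\cdot H_\gamma(P^{(2)}\|Q^{(2)})$ is exactly additive, whereas $H_\gamma(\calM(D)\|\calM(D'))$ is at most the same convex combination of the two regime-conditional divergences (with matching weights, since $\Pr[r{=}1]=\Pr[|T|{<}B]$). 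Hence it suffices to show, for each $r\in\{1,2\}$, that $(\calM(D),\calM(D'))$ conditioned on that regime is dominated by $(P^{(r)},Q^{(r)})$.

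\emph{Regime $r=1$ (no truncation on either side).} Writing $\boldv_A:=\sum_{\bfx\in A}\bfx$: conditioned on $T$, $\calM(D')$ outputs $N(\boldv_T,\sigma^2\mathbb{I})$ while $\calM(D)$ outputs $(1-p)N(\boldv_T,\sigma^2\mathbb{I})+p\,N(\boldv_T+\bfx_n,\sigma^2\mathbb{I})$, since $|T|+1\le B$ means neither side truncates. Translating by $-\boldv_T$ and applying \cref{lem:dr} with left list $\{\boldzero,\bfx_n\}$ and right list $\{\boldzero\}$, this pair is dominated by $\bigl((1-p)N(0,\sigma^2)+p\,N(-\ltwo{\bfx_n},\sigma^2),\,N(0,\sigma^2)\bigr)$; using $\ltwo{\bfx_n}\le 1$, the reflection $x\mapsto-x$ (a measure-preserving bijection fixing $N(0,\sigma^2)$), and monotonicity of the subsampled-Gaussian pair in its sensitivity parameter, this is dominated by $\bigl((1-p)N(0,\sigma^2)+p\,N(1,\sigma^2),\,N(0,\sigma^2)\bigr)=(P^{(1)},Q^{(1)})$. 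Since this bound is independent of $T$, averaging over $T$ (convexity of $H_\gamma$ in its pair of arguments) gives the claimed domination in regime $r=1$.

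\emph{Regime $r=2$ (truncation active).} By exchangeability of Poisson sampling, the conditional laws simplify: conditioned on $r=2$, the final set of $\calM(D')$ is uniform over size-$B$ subsets of $D'$; and for $\calM(D)$, the element $\bfx_n$ lands in the final set $S$ with probability exactly $q$ (namely $\Pr[b{=}1]\cdot\E[\tfrac{B}{|T|+1}\mid |T|{\ge}B]$, which is the $q$ in the statement), conditioned on $\bfx_n\notin S$ the set $S$ is uniform over size-$B$ subsets of $D'$, and conditioned on $\bfx_n\in S$ the set $S\setminus\{\bfx_n\}$ is uniform over size-$(B-1)$ subsets of $D'$. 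Now couple through a pair $(R',t)$ with $R'$ a uniform size-$B$ subset of $D'$ and $t$ a uniform element of $R'$ (one checks $R'\setminus\{t\}$ is then uniform over size-$(B-1)$ subsets of $D'$, and that the indicator $\bfx_n\in S$ can be realized independent of $(R',t)$ with probability $q$): conditioned on $(R',t)$, $\calM(D')$ outputs $N(\boldv_{R'},\sigma^2\mathbb{I})$, while $\calM(D)$ outputs $N(\boldv_{R'},\sigma^2\mathbb{I})$ w.p. $1-q$ (taking $S=R'$) and $N(\boldv_{R'}-t+\bfx_n,\sigma^2\mathbb{I})$ w.p. $q$ (taking $S=(R'\setminus\{t\})\cup\{\bfx_n\}$). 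Translating by $-\boldv_{R'}$ and applying \cref{lem:dr} as before, this pair is dominated by $\bigl((1-q)N(0,\sigma^2)+q\,N(-\ltwo{\bfx_n-t},\sigma^2),\,N(0,\sigma^2)\bigr)$; since $\ltwo{\bfx_n-t}\le\ltwo{\bfx_n}+\ltwo{t}\le 2$, reflection and sensitivity-monotonicity give domination by $\bigl((1-q)N(0,\sigma^2)+q\,N(2,\sigma^2),\,N(0,\sigma^2)\bigr)=(P^{(2)},Q^{(2)})$. Again this is independent of $(R',t)$, so averaging finishes regime $r=2$, and combining the two regimes as in the first paragraph completes the proof.

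The crux is the $r=2$ case, and specifically obtaining the \emph{averaged} weight $q$ rather than something worse. If one conditioned on $|T|=s$ instead of merely on the coarse regime, the natural per-$s$ dominating pair would be $\bigl((1-\tfrac{pB}{s+1})N(0,\sigma^2)+\tfrac{pB}{s+1}N(2,\sigma^2),\,N(0,\sigma^2)\bigr)$, and these cannot be recombined into the clean pair with weight $q=\E[\tfrac{pB}{|T|+1}\mid|T|{\ge}B]$, because $w\mapsto H_\gamma\bigl((1{-}w)N(0,\sigma^2)+wN(2,\sigma^2)\,\big\|\,N(0,\sigma^2)\bigr)$ is convex, not concave, in $w$. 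Two ideas---both already encoded in the statement---are needed: (i) conditioning only on the coarse regime $r$ and having $P,Q$ emit $r$, so the regime-conditional dominating pairs live on disjoint blocks and $H_\gamma$ is exactly (not just sub-)additive across them; and (ii) inside $r=2$, using exchangeability to turn the truncated batch into a plain uniform subsample of $D'$, so that truncation's only net effect on the output is ``with probability exactly $q$, swap a uniformly random batch element $t$ for $\bfx_n$'', making each atomic comparison a single subsampled-Gaussian instance of weight $q$ and shift of norm $\le 2$. The remaining ingredients---the exchangeability/uniformity identities, the size-$B$/size-$(B-1)$ coupling, and monotonicity of the subsampled-Gaussian hockey-stick curve in the sensitivity (e.g.\ by differentiating it in the shift parameter) together with the reflection symmetry---are routine.
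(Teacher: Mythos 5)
Your proof is correct and takes essentially the same route as the paper's: split on whether the Poisson sample of the $n-1$ common elements reaches $B$, reveal only that coarse event (which is exactly what makes the truncation-regime inclusion probability average out to $q$), and reduce each regime via \cref{lem:dr} to a subsampled-Gaussian dominating pair (sensitivity $1$ without truncation, sensitivity $2$ for the swap of $\bfx_n$ with a uniformly random element of the size-$B$ batch). The only difference is packaging: the paper argues via an explicit equivalent sampling procedure and releasing $\pi_{n-1}$ and $\max\{s_{n-1},B\}$ by post-processing, whereas you express the same decomposition through exchangeability/uniform-subset identities and joint convexity of the hockey-stick divergence.
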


Before we give the proof; we note that the expression $p \cdot \sum_{s = B}^{n-1} \frac{\Pr[\Binom(n-1, p) = s]}{\Pr[\Binom(n-1, p) \geq B]} \cdot \frac{B}{s+1}$ is somewhat cumbersome to evaluate in practice. In \cref{sec:implementation} we give an equivalent formula that is easier to evaluate and which we recommend using instead in practice.

\begin{proof}
Without loss of generality, let $D = \{\bfx_1, \bfx_2, \ldots, \bfx_n\}$, $D' = \{\bfx_2, \ldots, \bfx_n\}$ be two datasets that are adjacent under the add-or-remove-one adjacency.

We will instead consider the following equivalent process for choosing $S$ in \cref{fig:tps} from $D$ or $D'$:

\begin{itemize}
    \item We let $\pi_{n-1}$ be a uniformly random permutation of the examples $\{\bfx_2, \bfx_3, \ldots, \bfx_n\}$.
    \item We sample $s_{n-1} \sim \Binom(n-1, p)$ and $s_1 \sim \Bern(p)$.
    \item \textbf{If $s_{n-1} < B$:} We return the first $s_{n-1}$ elements in $\pi_{n-1}$. We additionally return  $\bfx_1$ if $s_1 = 1$ and we are sampling from $D$.
    \item \textbf{If $s_{n-1} \geq B$:} \begin{itemize}
        \item We sample $s_1' \sim \Bern(\frac{B}{s_{n-1} + 1})$.
        \item If $s_1 \cdot s_1' = 1$ and we are sampling from $D$, we return the first $B - 1$ elements of $\pi_{n-1}$ and $\bfx_1$. 
        \item Otherwise we return the first $B$ elements of $\pi_{n-1}$.
    \end{itemize}
\end{itemize}

\begin{lemma}\label{lem:equivalent}
The above procedure arrives at the same distribution of $S$ as \cref{fig:tps}.
\end{lemma}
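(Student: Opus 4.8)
The plan is to build a coupling under which the process of \cref{fig:tps} and the alternative procedure produce the same set $S$, and to verify agreement case by case. Write $S_0$ for the Poisson subsample drawn inside \cref{fig:tps}, and decompose it as $S_0 = S_0' \cup (\{\bfx_1\}$ if $\bfx_1$ is sampled$)$, where $S_0'$ is the Poisson subsample of $\{\bfx_2,\ldots,\bfx_n\}$ only. Since distinct examples are included independently, $|S_0'| \sim \Binom(n-1,p)$, the event ``$\bfx_1 \in S_0$'' is an independent $\Bern(p)$, and these two are independent of each other --- exactly the joint law of $(s_{n-1}, s_1)$ in the alternative procedure. Hence it suffices to fix arbitrary values $s_{n-1} = s$ and $s_1 \in \{0,1\}$ and show that the conditional law of the output set agrees between the two procedures. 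The two facts I would use throughout are: (i) conditioned on $|S_0'| = s$, the set $S_0'$ is a uniformly random size-$s$ subset of $\{\bfx_2,\ldots,\bfx_n\}$, which has the same distribution as the first $s$ entries of the uniform permutation $\pi_{n-1}$; and (ii) a uniformly random size-$k$ subset of a uniformly random size-$s$ subset of an $m$-element set ($k \le s \le m$) is again a uniformly random size-$k$ subset of the $m$-element set, which follows from the identity $\binom{m}{s}\binom{s}{k} = \binom{m}{k}\binom{m-k}{s-k}$.

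Next I would dispatch the easy cases. If $s < B$, then $|S_0| \le s+1 \le B$, so no truncation occurs in \cref{fig:tps}, which therefore outputs $S_0'$ together with $\bfx_1$ precisely when $s_1 = 1$ and we sample from $D$; by (i) this is exactly the set returned in the ``$s_{n-1} < B$'' branch of the alternative. If $s \ge B$ and either $s_1 = 0$ or we sample from $D'$, then $S_0 = S_0'$ has size $s \ge B$, and \cref{fig:tps} returns a uniformly random size-$B$ subset of a uniformly random size-$s$ subset of $\{\bfx_2,\ldots,\bfx_n\}$ (with the harmless convention that truncating an already-size-$B$ set is vacuous); by (ii) this is a uniformly random size-$B$ subset of $\{\bfx_2,\ldots,\bfx_n\}$, which matches the ``first $B$ elements of $\pi_{n-1}$'' output of the alternative.

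The remaining case is the crux: $s \ge B$, $s_1 = 1$, and we sample from $D$, so $S_0 = S_0' \cup \{\bfx_1\}$ has size $s+1 > B$ and \cref{fig:tps} returns a uniformly random size-$B$ subset $T$ of this size-$(s+1)$ set. Here I would observe that a uniformly random size-$B$ subset of a size-$(s+1)$ set contains the fixed element $\bfx_1$ with probability exactly $\binom{s}{B-1}/\binom{s+1}{B} = B/(s+1)$, matching the parameter of the variable $s_1'$; that conditioned on $\bfx_1 \in T$ the remainder $T \setminus \{\bfx_1\}$ is a uniformly random size-$(B-1)$ subset of $S_0'$, hence by (ii) a uniformly random size-$(B-1)$ subset of $\{\bfx_2,\ldots,\bfx_n\}$, matching ``first $B-1$ elements of $\pi_{n-1}$ and $\bfx_1$''; and that conditioned on $\bfx_1 \notin T$ the set $T$ is a uniformly random size-$B$ subset of $S_0'$, hence by (ii) a uniformly random size-$B$ subset of $\{\bfx_2,\ldots,\bfx_n\}$, matching ``first $B$ elements of $\pi_{n-1}$''. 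Combining all cases yields equality of the output distributions. I expect the only delicate points to be the boundary bookkeeping around $|S_0| = B$ versus $|S_0| > B$ (absorbed into the vacuous-truncation remark) and making the two-stage collapse (ii) fully rigorous; both reduce to the displayed binomial identity.
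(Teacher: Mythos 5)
Your proposal is correct and follows essentially the same route as the paper's proof: a case split on $s_{n-1} < B$ versus $s_{n-1} \geq B$ with or without $\bfx_1$ sampled (the paper's events $E_1, E_2, E_3$), with the crux being that a uniform size-$B$ subset of the size-$(s+1)$ sample retains $\bfx_1$ with probability $B/(s+1)$. Your write-up simply makes explicit the exchangeability and two-stage subset-collapse facts that the paper leaves as ``easy to see.''
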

\begin{proof}
We can see this is equivalent by considering three events when sampling from $D$:
\begin{itemize}
    \item $E_1$: At most $B-1$ elements are sampled from $D \setminus \{\bfx_1\}$ prior to truncation.
    \item $E_2$: At least $B$ elements are sampled from $D \setminus \{\bfx_1\}$ prior to truncation, and $\bfx_1$ is not sampled (also prior to truncation).
    \item $E_3$: At least $B$ elements are sampled from $D \setminus \{\bfx_1\}$ prior to truncation, and $\bfx_1$ is also sampled.
\end{itemize}

$E_1$ happens with probability $\Pr[s_{n-1} < B]$, and it is easy to see the above procedure conditioned on $s_{n-1} < B$ has the same distribution as truncated Poisson sampling conditioned on $E_1$.

Similarly, $E_2$ happens with probability $\Pr[s_{n-1} \geq B \land s_1 = 0]$, and it is again easy to see that the above procedure conditioned on  $s_{n-1} \geq B \land s_1 = 0$ has the same distribution as truncated Poisson sampling conditioned on $E_2$.

For $E_3$ and the remaining event $s_{n-1} \geq B \land s_1 = 1$ for the above procedure, if $s_{n-1}$ samples from $D \setminus \{\bfx_1\}$ and $\bfx_1$ are sampled prior to truncation, then $\bfx_1$ and $B-1$ elements from $D \setminus \bfx_{n-1}$ are included in the final sample w.p. $\frac{B}{s_{n-1} + 1}$ (and this corresponds to the case $s_1' = 1$), and otherwise $B$ elements from $D \setminus \bfx_{n-1}$ are included (corresponding to the case $s_1' = 0$). So we can also verify the distributions match conditioned on $E_3$ and the event $s_{n-1} \geq B \land s_1 = 1$ respectively.

For sampling from $D'$, the argument is the same except that $E_2, E_3$ can be merged into a single event.
\end{proof}

By the post-processing inequality, we can assume that in addition to releasing the output of $\calM_{B, p, \sigma}$, we release $\pi_{n-1}$ and $\max\{s_{n-1}, B\}$. Given this information:

\begin{itemize}
    \item If $\max\{s_{n-1}, B\} < B$, conditioned on $\pi_{n-1}$ and $\max\{s_{n-1}, B\}$ the distribution of the sum of the vectors in $S$ is the sum of (i) the first $s_{n-1}$ vectors in $\pi_{n-1}$ (ii) $\Bern(p)$ times $\bfx_1$ if sampling from $D$, $\boldzero$ if sampling from $D'$, and (iii) the Gaussian noise. (i) is a constant known to the adversary, so this is just a Poisson subsampled Gaussian mechanism with sampling probability $p$ and noise multiplier $\sigma$. A dominating pair for this mechanism is $P | r = 1, Q | r = 1$ as defined in the theorem statement (this can be proven e.g. using \cref{lem:dr}). 
    \item If $\max\{s_{n-1}, B\} = B$, conditioned on $\pi_{n-1}$ and $\max\{s_{n-1}, B\}$ the distribution of the sum of the gradients is the sum of (i) the first $B$ examples in $\pi_{n-1}$, (ii) $\Bern(p)$ times $\Bern(\frac{B}{s_{n-1} + 1})$ times the difference between $\bfx_1$ and the $B$th-element of $\pi_{n-1}$ if sampling from $D$, $\boldzero$ if sampling from $D'$, and (iii) the Gaussian noise.
    
    Conditioned on $s_{n-1} \geq B$, $\Bern(p)$ times $\Bern(\frac{B}{s_{n-1} + 1})$ is equivalent to 
    
    \[\Bern\left(p \cdot \sum_{s = B}^{n-1} \frac{\Pr[\Binom(n-1, p) = s]}{\Pr[\Binom(n-1, p) \geq B]} \cdot \frac{B}{s+1}\right)\]
    
    So this is also just a Poisson subsampled Gaussian mechanism, but the sensitivity is doubled since (ii) is the difference between two gradients instead of a single gradient. A dominating pair for this mechanism is $P | r = 2, Q | r = 2$ as defined in the theorem statement. 
\end{itemize}

So, the combination of these two Poisson subsampled Gaussian mechanisms is the same as the mechanism described in the theorem statement, with releasing $r$ being equivalent to releasing $\max\{s_{n-1}, B\}$ as knowing the specific value of $s_{n-1} < B$ does not affect the privacy guarantee.
\end{proof}

Some comments are in order about the result:

\mypar{Privacy of $n$} Our adjacency definition and privacy guarantee notably depends on $n$, unlike Poisson sampling without truncation where the privacy guarantee and the add-remove adjacency definition are independent of $n$. In settings where we are only running repeated applications of $\calM_{B,n,\sigma}$ (e.g., a single run of DP-SGD), this dependence might be acceptable, as it still allows the data curator to make the statement ``the output of using this dataset is indistinguishable from the output obtained by deleting your example from the dataset.'' However, if we are composing $\calM_{B,n,\sigma}$ with other mechanisms, we need to be careful about using this result. Hence, we recommend avoiding the use of this result in these settings, and even when using this result just to run $\calM_{B,n,\sigma}$ with no other mechanisms, one should be precise when stating the formal privacy guarantees.

Note that for the zero-out and replace-one adjacencies that we discuss later, $n$ is public since adjacent datasets are the same size so the privacy guarantee depending on $n$ is not an issue.

\mypar{Tightness of the bound} If $B = \infty$ (i.e. truncation never happens), \cref{thm:truncation-addremove} retrieves the privacy guarantees of Poisson sampling without truncation, as desired. If we are in a setting where $p = 1, B < n$ (i.e. truncation always happens), truncated Poisson sampling retrieves the method of sampling a batch of fixed size $B$. This method has the privacy guarantees of a Poisson sampled Gaussian mechanism with doubled sensitivity and sampling probability $B/n$ (and this is tight), which is exactly what \cref{thm:truncation-addremove} recovers. So at the ``extremes'' of never truncating and always truncating, this bound is tight, although we conjecture it is possible to slightly improve this bound in the intermediate regime via an analysis that does not release $s_{n-1}$. Note that we only pay for the doubled sensitivity when truncation occurs; if $\Pr[\Binom(n-1, p) \geq B]$ is small, this bound nearly retrieves the privacy guarantees of Poisson sampling without truncation.

\mypar{Extension to BandMF} BandMF with cyclic Poisson sampling \cite{choquette2024amplified} is a generalization of DP-SGD with Poisson sampling, whose privacy can be analyzed via a reduction to the privacy of DP-SGD. Using this reduction, the above analysis (and the analogous results for other adjacencies) also extend immediately to BandMF with truncated cyclic Poisson sampling, by replacing $n$ in \cref{thm:truncation-addremove} with the size of each subset in the cyclic Poisson sampling scheme.

\subsection{Zero-Out Adjacency}

We next extend the results to the zero-out adjacency:

\begin{theorem}\label{thm:truncation-zeroout}
Given $n, p, B, \sigma$ define the distributions $P, Q$ as follows:
\begin{itemize}
    \item We sample 
    $r = 1$ w.p. $ \Pr[\Binom(n-1, p) < B]$ and $r = 2$ otherwise.
    \item For $P$, we set 
    \[x = \left\{\begin{array}{lr}
        \Bern(p) & \text{if } r = 1,\\
        2 \cdot \Bern\left(p \cdot \sum_{s = B}^{n-1} \frac{\Pr[\Binom(n-1, p) = s]}{\Pr[\Binom(n-1, p) \geq B]} \cdot \frac{B}{s+1}\right) & \text{if } r = 2.\\
    \end{array}\right\}.\]
    For $Q$, we set 
    \[x = \left\{\begin{array}{lr}
        0 & \text{if } r = 1,\\
        -\Bern\left(p \cdot \sum_{s = B}^{n-1} \frac{\Pr[\Binom(n-1, p) = s]}{\Pr[\Binom(n-1, p) \geq B]} \cdot \frac{B}{s+1}\right) & \text{if } r = 2.\\
    \end{array}\right\}.\]
    \item The final random variable is $(r, x + z)$, where $z \sim N(0, \sigma^2)$.
\end{itemize}

$P, Q$ is a dominating pair for $\calM_{B,p,\sigma}$ with respect to $\stackrel{ZO,n}{\sim}$.
\end{theorem}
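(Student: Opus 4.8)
The plan is to follow the proof of \cref{thm:truncation-addremove} almost verbatim, changing only the two places where the placeholder $\boldzero$ in $D'$ behaves differently from a deleted element. Without loss of generality take $D = \{\bfx_1,\ldots,\bfx_n\}$ and $D' = \{\boldzero,\bfx_2,\ldots,\bfx_n\}$, so the datasets share the $n-1$ elements $\bfx_2,\ldots,\bfx_n$ and differ only in a ``special slot'' holding $\bfx_1$ in $D$ and $\boldzero$ in $D'$. I would reuse the equivalent sampling procedure behind \cref{lem:equivalent}: draw a uniform permutation $\pi_{n-1}$ of $\bfx_2,\ldots,\bfx_n$, draw $s_{n-1}\sim\Binom(n-1,p)$ and $s_1\sim\Bern(p)$, and if $s_{n-1}\ge B$ also draw $s_1'\sim\Bern(B/(s_{n-1}+1))$; the returned set is the first $s_{n-1}$ entries of $\pi_{n-1}$ (plus the special element iff $s_1=1$) when $s_{n-1}<B$, and otherwise the first $B$ entries of $\pi_{n-1}$, except that when $s_1=s_1'=1$ it is the first $B-1$ entries together with the special element. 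The only change from the add-remove setting is that for $D'$ this procedure is run with the special element equal to $\boldzero$ rather than omitted; since $\boldzero$ contributes nothing to the vector sum, one checks as in \cref{lem:equivalent} that this reproduces $\calM_{B,p,\sigma}(D')$.

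By post-processing, additionally reveal $\pi_{n-1}$ and the part of $s_{n-1}$ distinguishing $s_{n-1}<B$ from $s_{n-1}\ge B$ (revealing $s_{n-1}$ itself when $s_{n-1}<B$), exactly as in the proof of \cref{thm:truncation-addremove}; this is what the coordinate $r$ encodes, and $r=1$ has probability $\Pr[\Binom(n-1,p)<B]$ under both $D$ and $D'$ since $s_{n-1}$ ignores the special slot. Conditioned on $r=1$, the output is a constant (the first $s_{n-1}$ entries of $\pi_{n-1}$, known to the adversary) plus $\Bern(p)\cdot\bfx_1$ for $D$ and plus $\boldzero$ for $D'$, plus Gaussian noise; subtracting the constant and applying \cref{lem:dr} with the single nonzero component mean $\bfx_1$ (of $\ell_2$-norm at most $1$), together with the standard monotonicity of the subsampled-Gaussian hockey-stick divergences in the component norm, shows $P\mid r=1,\ Q\mid r=1$ dominates this pair.

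Conditioned on $r=2$ (i.e.\ $s_{n-1}\ge B$), write $\bfw=\pi_{n-1}[B]$ and $\bfc=\sum_{i=1}^{B}\pi_{n-1}[i]$. Then the output of $D$ is $\bfc + s_1 s_1'(\bfx_1-\bfw)$ plus noise, while the output of $D'$ is $\bfc + s_1 s_1'(\boldzero-\bfw) = \bfc - s_1 s_1'\bfw$ plus noise; the new phenomenon relative to add-remove is that when the $\boldzero$ in $D'$ survives the random truncation it displaces $\bfw$ from the batch, so $D'$'s output also moves. Marginalizing $s_{n-1}$ over $s_{n-1}\ge B$ turns $s_1 s_1'$ into a $\Bern(p')$ variable with $p' = p\sum_{s=B}^{n-1}\frac{\Pr[\Binom(n-1,p)=s]}{\Pr[\Binom(n-1,p)\ge B]}\cdot\frac{B}{s+1}$, multiplying the respective displacement in each of $D,D'$. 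Subtracting the known $\bfc$, \cref{lem:dr} applied with component means $\{\boldzero,\bfx_1-\bfw\}$ for $D$ and $\{\boldzero,-\bfw\}$ for $D'$ (weights $1-p',p'$) gives a dominating pair whose nonzero real component means are $-\ltwo{\bfx_1-\bfw}$ and $+\ltwo{\bfw}$; reflecting the pair through the origin (which preserves every hockey-stick divergence) and then using $\ltwo{\bfx_1-\bfw}\le 2$, $\ltwo{\bfw}\le 1$ with the same monotonicity fact replaces these by $+2$ and $-1$, which is exactly $P\mid r=2,\ Q\mid r=2$. Finally, since $r$ has the same distribution under $\calM_{B,p,\sigma}(D),\calM_{B,p,\sigma}(D'),P,Q$ and the reported value is $(r,x+z)$, the per-$r$ dominating pairs combine into a dominating pair for the whole mechanism.

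I expect the $r=2$ step to be the main obstacle: one must carefully track that, unlike in the add-remove case, the zero-out $D'$ output is itself randomized at $r=2$ (through the displacement of $\bfw$ when the sampled $\boldzero$ is retained after truncation), and verify that the coupling collapses to a single $\Bern(p')$ acting on both displacements. The subsequent replacement of $\ltwo{\bfx_1-\bfw}$ and $\ltwo{\bfw}$ by the worst-case values $2$ and $1$ is routine but relies on the (standard, and implicitly already used in \cref{thm:truncation-addremove}) monotonicity of the relevant divergences in the component norms; it is also why the resulting bound is not tight for zero-out at the $p=1$ extreme, where the true sensitivity is only $1$.
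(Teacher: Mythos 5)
Your proposal is correct and follows essentially the same route as the paper's proof: reuse the equivalent sampling procedure from \cref{lem:equivalent} with the special slot now holding $\boldzero$, release $\pi_{n-1}$ and the truncation indicator by post-processing, marginalize $s_1 s_1'$ over $s_{n-1} \geq B$ into a single $\Bern(q)$, and apply \cref{lem:dr} to the pair of shifted mixtures with the displaced element $\pi_{n-1}(B)$. Your handling of the $r=2$ case (the displacement of $\bfw$ under $D'$, and the reflection/worst-case-norm step) matches the paper's argument, and is in fact slightly more careful about the sign bookkeeping than the paper's own writeup.
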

\begin{proof}
The proof proceeds similarly to \cref{thm:truncation-addremove}. We use the sampling procedure for sampling $D$ in that proof to sample from both $D$ and $D'$, since $D'$ is now also size $n$. Then, defining $s_{n-1}$ to be the number of elements sampled from $D \setminus \{\bfx_1\}$ (or analogously $D' \setminus \{\bot\}$ as before):

\begin{itemize}
    \item If $\max\{s_{n-1}, B\} < B$, as in the add-remove case we reduce to a Poisson subsampled Gaussian mechanism with sampling probability $p$ and noise multiplier $\sigma$ which matches the $r = 1$ case in \cref{thm:truncation-addremove}.
    \item If $\max\{s_{n-1}, B\} = B$, we proceed differently. Now conditioned on $\pi_{n-1}$ and $\max\{s_{n-1}, B\}$ the distribution of the sum of the gradients is the sum of (i) the gradients of the first $B$ examples in $\pi_{n-1}$, (ii) $\Bern(p)$ times $\Bern(\frac{B}{s_{n-1} + 1})$ times the difference between $\bfx_1$ and the $B$th-element of $\pi_{n-1}$ if sampling from $D$, $\Bern(\frac{B}{s_{n-1} + 1})$ times $-1$ times the $B$th-element of $\pi_{n-1}$ if sampling from $D'$, and (iii) the Gaussian noise. This is the same as the add-remove case, except for the value of (ii) when sampling from $D'$.
    
    Again (i) is a constant known by the adversary, so this is a Gaussian mechanism applied to the random variable (ii). Conditioned on $s_{n-1} \geq B$, $\Bern(p)$ times $\Bern(\frac{B}{s_{n-1} + 1})$ is still equivalent to 
    
    \[\Bern\left(q\right), q = p \cdot \sum_{s = B}^{n-1} \frac{\Pr[\Binom(n-1, p) = s]}{\Pr[\Binom(n-1, p) \geq B]} \cdot \frac{B}{s+1}.\]
    
    Let $\pi_{n-1}(B)$ denote the $B$th element of $\pi_{n-1}$. Then given $D$ and conditioned on $s_{n-1} \geq B$, we are sampling from $(1-q) \cdot N(-\pi_{n-1}(B), \sigma^2 \mathbb{I}) + q \cdot N(\bfx_1, \sigma^2 \mathbb{I})$, and given $D'$ we are sampling from  $(1-q) \cdot N(-\pi_{n-1}(B), \sigma^2 \mathbb{I}) + q \cdot N(\boldzero, \sigma^2 \mathbb{I})$. By a linear translation distinguishing these distributions is equivalent to distinguishing $(1-q) \cdot N(\boldzero, \sigma^2 \mathbb{I}) + q \cdot N(\bfx_1-\pi_{n-1}(B), \sigma^2 \mathbb{I})$ and $(1-q) \cdot N(\boldzero, \sigma^2 \mathbb{I}) + q \cdot N(- \pi_{n-1}(B), \sigma^2 \mathbb{I})$. Since $\ltwo{\pi_{n-1}(B)} \leq 1, \ltwo{\bfx_1 - \pi_{n-1}(B)} \leq 2$, \cref{lem:dr} gives that the pair $(1-q) \cdot N(0, \sigma^2) + q \cdot N(2, \sigma^2)$ and $(1-q) \cdot N(0, \sigma^2) + q \cdot N(-1, \sigma^2)$ is a dominating pair for the mechanism conditioned on $s_{n-1} \geq B$, which is exactly the $r = 2$ case in the theorem statement.
\end{itemize}
\end{proof}

\mypar{Tightness of the bound} Unlike the add-remove bound, this bound is not tight in the extreme setting where $p = 1, B < n$ and truncation always happens. The reason is that the application of \cref{lem:dr} is slack: This application gives a worst case bound for distinguishing pairs of distributions $(1-q) \cdot N(\boldzero, \sigma^2 \mathbb{I}) + q \cdot N(\bfv, \sigma^2 \mathbb{I})$ and $(1-q) \cdot N(\boldzero, \sigma^2 \mathbb{I}) + q \cdot N(\bfw, \sigma^2 \mathbb{I})$ where $\ltwo{\bfv}, \ltwo{\bfw}$ are bounded. However, in our application $\bfw = \bfu - \bfv$, so we additionally know $\ltwo{\bfv - \bfw} \leq 1$, but this is information our application of \cref{lem:dr} does not use. So unlike our add-remove bound, there is some potential to improve this bound even in this extreme setting.

We conjecture that for distinguishing $(1-q) \cdot N(\boldzero, \sigma^2 \mathbb{I}) + q \cdot N(\bfv, \sigma^2 \mathbb{I})$ and $(1-q) \cdot N(\boldzero, \sigma^2 \mathbb{I}) + q \cdot N(\bfu + \bfv, \sigma^2 \mathbb{I})$ (1) a tightly dominating pair does not exist, (2) this pair is $(\epsilon, \delta)$-indistinguishable if one of $P = (1-q) \cdot N(0, \sigma^2) + q \cdot N(2, \sigma^2)$ or $P = N(0, \sigma^2)$ and $Q = (1-q) \cdot N(0, \sigma^2) + q \cdot N(1, \sigma^2)$ are $(\epsilon, \delta)$-indistinguishable, which could provide a method for near-tight PLD accounting. The improvements from this conjecture are likely to be relatively small, so we did not pursue this approach further.

\subsection{Replace-One Adjacency}

We finally extend the results to the replace-one adjacency:

\begin{theorem}\label{thm:truncation-replace}
Given $n, p, B, \sigma$ define the distributions $P, Q$ as follows:
\begin{itemize}
    \item We sample 
    $r = 1$ w.p. $ \Pr[\Binom(n-1, p) < B]$ and $r = 2$ otherwise.
    \item For $P$, we set 
    \[x = \left\{\begin{array}{lr}
        \Bern(p) & \text{if } r = 1,\\
        2 \cdot \Bern\left(p \cdot \sum_{s = B}^{n-1} \frac{\Pr[\Binom(n-1, p) = s]}{\Pr[\Binom(n-1, p) \geq B]} \cdot \frac{B}{s+1}\right) & \text{if } r = 2.\\
    \end{array}\right\}.\]
    For $Q$, we set 
    \[x = \left\{\begin{array}{lr}
        -\Bern(p) & \text{if } r = 1,\\
        -2 \cdot \Bern\left(p \cdot \sum_{s = B}^{n-1} \frac{\Pr[\Binom(n-1, p) = s]}{\Pr[\Binom(n-1, p) \geq B]} \cdot \frac{B}{s+1}\right) & \text{if } r = 2.\\
    \end{array}\right\}.\]
    \item The final random variable is $(r, x + z)$, where $z \sim N(0, \sigma^2)$.
\end{itemize}
$P, Q$ is a dominating pair for $\calM_{B,p,\sigma}$ with respect to $\stackrel{RO,n}{\sim}$.
\end{theorem}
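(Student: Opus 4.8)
The plan is to follow the proofs of \cref{thm:truncation-addremove,thm:truncation-zeroout} almost verbatim; the only new feature is that the element occupying the ``distinguished slot'' is an arbitrary unit vector $\bfx_1'$ on the $D'$ side rather than absent (add-remove) or $\boldzero$ (zero-out). Write $D = \{\bfx_1, \bfx_2, \ldots, \bfx_n\}$ and $D' = \{\bfx_1', \bfx_2, \ldots, \bfx_n\}$ with $\bfx_1, \bfx_1' \in \mathbb{B}(\boldzero, 1)$, so that $\{\bfx_2, \ldots, \bfx_n\}$ is common. As in \cref{thm:truncation-zeroout}, run the sampling procedure from the proof of \cref{thm:truncation-addremove} (its $D$-version) for both datasets, which is valid since $|D| = |D'| = n$, invoking \cref{lem:equivalent} with distinguished element $\bfx_1$ for $D$ and $\bfx_1'$ for $D'$; let $s_{n-1} \sim \Binom(n-1, p)$ count the common examples included pre-truncation, and let $\pi_{n-1}$ be the uniform permutation of the common examples. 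By post-processing, assume we additionally release $\pi_{n-1}$ and whether $s_{n-1} \geq B$ (and, when $s_{n-1} < B$, the exact value of $s_{n-1}$).

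I would then split into the same two cases. When $s_{n-1} < B$: after subtracting the adversary-known constant $\sum_{i \leq s_{n-1}} \pi_{n-1}(i)$, the output is $\Bern(p)\cdot\bfx_1 + \bfz$ under $D$ and $\Bern(p)\cdot\bfx_1' + \bfz$ under $D'$, i.e.\ $(1-p)N(\boldzero, \sigma^2\mathbb{I}) + pN(\bfx_1, \sigma^2\mathbb{I})$ versus $(1-p)N(\boldzero, \sigma^2\mathbb{I}) + pN(\bfx_1', \sigma^2\mathbb{I})$. Applying \cref{lem:dr} to the two two-atom mixtures over $\{\boldzero, \bfx_1\}$ and $\{\boldzero, \bfx_1'\}$ with the bound $\ltwo{\bfx_1}, \ltwo{\bfx_1'} \leq 1$ gives that $(1-p)N(0,\sigma^2) + pN(1,\sigma^2)$ and $(1-p)N(0,\sigma^2) + pN(-1,\sigma^2)$ — exactly $P\mid r=1$ and $Q\mid r=1$ — is a dominating pair here. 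When $s_{n-1} \geq B$: exactly as in \cref{thm:truncation-zeroout}, conditioning on $s_{n-1} \geq B$ turns $\Bern(p)\Bern(\tfrac{B}{s_{n-1}+1})$ into $\Bern(q)$ with $q = p\sum_{s=B}^{n-1}\tfrac{\Pr[\Binom(n-1,p)=s]}{\Pr[\Binom(n-1,p)\geq B]}\cdot\tfrac{B}{s+1}$, and after subtracting the constant $\sum_{i\leq B}\pi_{n-1}(i)$ the output is $(1-q)N(\boldzero, \sigma^2\mathbb{I}) + qN(\bfx_1 - \pi_{n-1}(B), \sigma^2\mathbb{I})$ under $D$ and the same with $\bfx_1'$ replacing $\bfx_1$ under $D'$; now $\ltwo{\bfx_1 - \pi_{n-1}(B)}, \ltwo{\bfx_1' - \pi_{n-1}(B)} \leq 2$, so \cref{lem:dr} yields $(1-q)N(0,\sigma^2) + qN(2,\sigma^2)$ and $(1-q)N(0,\sigma^2) + qN(-2,\sigma^2)$ — exactly $P\mid r=2$ and $Q\mid r=2$. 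Finally, as in the earlier proofs, the variable $r = \indi{s_{n-1}\geq B}$ has the same distribution under $D$ and $D'$, its exact value when $s_{n-1} < B$ is irrelevant, and the permutation-dependent constants act as a common translation of both members of each conditional pair; combining the two conditional pairs over the public $r$ gives exactly $(P, Q)$.

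I do not expect a real obstacle — this is the most routine of the three theorems. The only points to watch are: (i) checking that \cref{lem:equivalent} applies unchanged with an arbitrary $\bfx_1'$ in the distinguished slot (it does, since that lemma is stated for a generic distinguished element); and (ii) that the now fully symmetric use of \cref{lem:dr} produces the stated constants — in Case 1 the bound uses only $\ltwo{\bfx_1}, \ltwo{\bfx_1'} \leq 1$ individually, which is attained when $\bfx_1, \bfx_1'$ are antipodal unit vectors, so the $r=1$ part of the pair is tight; in Case 2 it uses only $\ltwo{\bfx_1 - \pi_{n-1}(B)}, \ltwo{\bfx_1' - \pi_{n-1}(B)} \leq 2$ individually, and since $\ltwo{(\bfx_1 - \pi_{n-1}(B)) - (\bfx_1' - \pi_{n-1}(B))} = \ltwo{\bfx_1 - \bfx_1'} \leq 2 < 4$ the antipodal-norm-$2$ configuration is unattainable, so (as already noted after \cref{thm:truncation-zeroout}) the $r=2$ part is slack and this bound is not tight in the $p=1,\,B<n$ regime.
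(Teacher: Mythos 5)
Your proof is correct and follows essentially the same route as the paper: the same equivalent sampling procedure with $\bfx_1'$ in the distinguished slot, the same post-processing release of $\pi_{n-1}$ and the truncation indicator, and the same application of \cref{lem:dr} with sensitivity $2$ in the truncated case. The only cosmetic difference is that for $r=1$ the paper cites the known dominating pair for the Poisson-subsampled Gaussian under replace-one (e.g.\ \cite{koskela20computing}) while you rederive it from \cref{lem:dr}; both give the same pair, up to the harmless reflection of signs that the paper also uses implicitly.
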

\begin{proof}
The proof proceeds similarly to \cref{thm:truncation-zeroout}. The two key differences are (1) when we don't truncate the batch, we reduce to the known dominating pair for the Poisson sampled Gaussian mechanism under the replace adjacency (as proven in e.g. \cite{koskela20computing}) and (2) when we do truncate the batch, instead of distinguishing between $(1-q) \cdot N(-\pi_{n-1}(B), \sigma^2 \mathbb{I}) + q \cdot N(\bfx_1, \sigma^2 \mathbb{I})$ and $(1-q) \cdot N(-\pi_{n-1}(B), \sigma^2 \mathbb{I}) + q \cdot N(\boldzero, \sigma^2 \mathbb{I})$ as in the zero-out case, we are distinguishing between  $(1-q) \cdot N(-\pi_{n-1}(B), \sigma^2 \mathbb{I}) + q \cdot N(\bfx_1, \sigma^2 \mathbb{I})$ and $(1-q) \cdot N(-\pi_{n-1}(B), \sigma^2 \mathbb{I}) + q \cdot N(\bfx_1', \sigma^2 \mathbb{I})$. So in applying \cref{lem:dr}, we instead reduce to the pair of distributions $(1-q) \cdot N(0, \sigma^2) + q \cdot N(-2, \sigma^2)$ and $(1-q) \cdot N(0, \sigma^2) + q \cdot N(2, \sigma^2)$.
\end{proof}

We briefly remark that as in the zero-out case, there is some slack in the application of \cref{lem:dr} for similar reasons.

\subsection{Implementation of PLD Accounting}\label{sec:implementation}

In all our main results, we reduce to a pair of distributions where the final random variable is $(r, x+z)$. Conditioned on $r = 1$ or $r = 2$, the resulting pair of random variables is some variant of a Gaussian mechanism whose privacy loss distribution can be handled using existing accounting tools. Since $r$ is part of the output and the probability that $r = 1$ or $r = 2$ does not depend on the dataset, the privacy loss of an output $(r, x+z)$ does not change when we condition on $r$. So, we can compute the privacy loss distribution of our dominating pairs by computing the privacy loss distributions of the dominating pair conditioned on $r = 1$ or conditioned on $r = 2$, and then taking a weighted mixture of these conditional privacy loss distributions to be our final privacy loss distribution. Put more simply, we are using a ``mixture of mechanisms'' where the choice of mechanism in the mixture is made public, so the privacy loss distribution of the mixture of mechanisms is same as the corresponding mixture of the privacy loss distributions of the individual mechanisms. Concurrently with this note we have provided support for mixtures of privacy loss distributions in the \texttt{dp\_accounting} Python library \cite{dp_accounting}, as well as a \texttt{TruncatedSubsampledGaussianDpEvent} in \texttt{dp\_accounting} which incorporates the accounting provided in this note.

We also make the computation of the probability $p \cdot \sum_{s = B}^{n-1} \frac{\Pr[\Binom(n-1, p) = s]}{\Pr[\Binom(n-1, p) \geq B]} \cdot \frac{B}{s+1}$ more efficient using the following observation:

\begin{align*}
p \cdot \sum_{s = B}^{n-1} \frac{\Pr[\Binom(n-1, p) = s]}{\Pr[\Binom(n-1, p) \geq B]} \cdot \frac{B}{s+1} &= p \cdot \sum_{s = B}^{n-1} \frac{\binom{n-1}{s} p^s (1-p)^{n-s-1}}{\Pr[\Binom(n-1, p) \geq B]} \cdot \frac{B}{s+1}\\
&= p \cdot \sum_{s = B}^{n-1} \frac{\frac{(n-1)!}{(s+1)!(n-s-1)!} p^s (1-p)^{n-s-1}}{\Pr[\Binom(n-1, p) \geq B]} \cdot B\\
&= \sum_{s = B}^{n-1} \frac{\frac{(n-1)!}{(s+1)!(n-s-1)!} p^{s+1} (1-p)^{n-s-1}}{\Pr[\Binom(n-1, p) \geq B]} \cdot B\\
&= \sum_{s = B}^{n-1} \frac{\frac{n!}{(s+1)!(n-s-1)!} p^{s+1} (1-p)^{n-s-1}}{\Pr[\Binom(n-1, p) \geq B]} \cdot \frac{B}{n}\\
&= \sum_{s = B}^{n-1} \frac{\Pr[\Binom(n,p) = s+1]}{\Pr[\Binom(n-1, p) \geq B]} \cdot \frac{B}{n}\\
&= \frac{\Pr[\Binom(n,p) \geq B+1]}{\Pr[\Binom(n-1, p) \geq B]} \cdot \frac{B}{n}.\\
\end{align*}

The final expression can be evaluated in a constant number of calls to e.g. \texttt{scipy.stats.binom.sf}, whereas the initial sum requires $\approx n$ calls to \texttt{scipy} methods.

\section*{Acknowledgements}

We are thankful to Pritish Kamath and Ryan McKenna for their feedback and suggestions on improving the writeup and discussions on different adjacency definitions. Pritish additionally suggested the more efficient calculation of $\sum_{s = B}^{n-1} \frac{\Pr[\Binom(n-1, p) = s]}{\Pr[\Binom(n-1, p) \geq B]} \cdot \frac{B}{s+1}$.

\bibliographystyle{alpha}
\bibliography{ref}

\newcommand{\etalchar}[1]{$^{#1}$}
\begin{thebibliography}{CCGM{\etalchar{+}}24}

\bibitem[CCGM{\etalchar{+}}24]{choquette2024amplified}
Christopher~A Choquette-Choo, Arun Ganesh, Ryan McKenna, H~Brendan McMahan,
  John Rush, Abhradeep Guha~Thakurta, and Zheng Xu.
\newblock (amplified) banded matrix factorization: A unified approach to
  private training.
\newblock {\em Advances in Neural Information Processing Systems}, 36, 2024.

\bibitem[CGK{\etalchar{+}}24]{chua2024scalable}
Lynn Chua, Badih Ghazi, Pritish Kamath, Ravi Kumar, Pasin Manurangsi, Amer
  Sinha, and Chiyuan Zhang.
\newblock Scalable {DP}-{SGD}: Shuffling vs. poisson subsampling.
\newblock In {\em The Thirty-eighth Annual Conference on Neural Information
  Processing Systems}, 2024.

\bibitem[{DP }22]{dp_accounting}
{DP Team}.
\newblock Google's differential privacy libraries., 2022.
\newblock \url{https://github.com/google/differential-privacy}.

\bibitem[KJH20]{koskela20computing}
Antti Koskela, Joonas J\"alk\"o, and Antti Honkela.
\newblock Computing tight differential privacy guarantees using fft.
\newblock In Silvia Chiappa and Roberto Calandra, editors, {\em Proceedings of
  the Twenty Third International Conference on Artificial Intelligence and
  Statistics}, volume 108 of {\em Proceedings of Machine Learning Research},
  pages 2560--2569. PMLR, 26--28 Aug 2020.

\bibitem[SSKG24]{schuchardt2024unified}
Jan Schuchardt, Mihail Stoian, Arthur Kosmala, and Stephan G{\"u}nnemann.
\newblock Unified mechanism-specific amplification by subsampling and group
  privacy amplification.
\newblock In {\em The Thirty-eighth Annual Conference on Neural Information
  Processing Systems}, 2024.

\bibitem[ZDW22]{pmlr-v151-zhu22c}
Yuqing Zhu, Jinshuo Dong, and Yu-Xiang Wang.
\newblock Optimal accounting of differential privacy via characteristic
  function.
\newblock In Gustau Camps-Valls, Francisco J.~R. Ruiz, and Isabel Valera,
  editors, {\em Proceedings of The 25th International Conference on Artificial
  Intelligence and Statistics}, volume 151 of {\em Proceedings of Machine
  Learning Research}, pages 4782--4817. PMLR, 28--30 Mar 2022.

\end{thebibliography}
\end{document}